
\documentclass[table]{ccjnl}
\usepackage{lipsum,amsmath}
\usepackage{cuted}
\usepackage{bm}
\usepackage{color}
\graphicspath{{figures/}}

\title{Time-Critical Tasks Implementation in MEC based Multi-Robot Cooperation Systems}

\author{Rui Yin\inst{2,*}, Yineng Shen\inst{1}, Huawei Zhu\inst{2}, Xianfu Chen\inst{3}, and Celimuge Wu\inst{4}\corinfo{yinrui@zucc.edu.cn}.
}

\receiveddate{May. 27, 2021}
\reviseddate{May. 27, 2021}
\Editor{Bei Liu}

\address[1]{School of Information and Electrical Engineering, Zhejiang University, Hangzhou 310027, China.}
\address[2]{School of Information and Electrical Engineering, Zhejiang University City College, Hangzhou 310015, China.}
\address[3]{The VTT Technical Research Centre of Finland, Oulu, 90570, Finland.}
\address[4]{Graduate School of Informatics and Engineering, the University of Electro-Communications, Tokyo, 182-8585, Japan.}


\begin{document}

\maketitle

\begin{abstract}
\label{abstract}
\emph{Mobile edge computing} (MEC) deployment in a \emph{multi-robot cooperation} (MRC) system is an effective way to accomplish the tasks in terms of energy consumption and implementation latency.
However, the computation and communication resources need to be considered jointly to fully exploit the advantages brought by the MEC technology. In this paper, the scenario where multi robots cooperate to accomplish the time-critical tasks is studied, where an intelligent \emph{master robot} (MR) acts as an edge server to provide services to multiple \emph{slave robots} (SRs) and the SRs are responsible for the environment sensing and data collection. To save energy and prolong the function time of the system, two schemes are proposed to optimize the computation and communication resources, respectively.
In the first scheme, the energy consumption of SRs is minimized and balanced while
guaranteeing that the tasks are accomplished under a time constraint. In the second scheme,
not only the energy consumption, but also the remaining energies of the SRs are considered to
enhance the robustness of the system. Through the analysis and numerical simulations, we demonstrate that even though the first policy may guarantee the minimization on the total SRs' energy consumption, the function time of MRC system by the second scheme is longer than that by the first one.
\keywords{Cooperative robots; Mobile edge computing; Energy consumption; Resource management}
\end{abstract}
\section{Introduction}\label{Introduction}
With the development of intelligent technologies, the individual robot becomes powerful. As a consequence, more and more autonomous robots are employed to replace the human forces to provide various services, such as floor mopping, surveillance by \emph{unmanned aerial vehicles} (UAVs), surgery by surgical robots, and manufacture by mechanical arms \cite{2020-liu}. In contrast, the amount of latency-sensitive and computation-intensive tasks implemented by robots has greatly increased. For instance, in an effort to slow the spread of COVID-19, UAVs are required to monitor social distance between people \cite{20-chamola}, and the search-and-rescue tasks are instantly executed after the nature disaster. To address these issues, \emph{multi-robot cooperation} (MRC) systems have received extensive attention in recent years \cite{18-ismail}.

However, the battery budget, limited sensing ability, and computation capacity of the individual robot pose challenges to accomplish tasks \cite{18-bedi}. \emph{mobile edge computing} (MEC) technology has emerged as a promising solution to enable the task offloading via wireless communication, which perfectly fits the MRC structure. On the one hand, multiple robots need to work cooperatively to accomplish the jobs under a strict time constraint. On the other hand, efficient communication and task assignment are important to ensure effective cooperation among robots. Therefore, in this paper, we study a MEC-based MRC system where an intelligent \emph{master robot} (MR) acting as an edge server leads multiple \emph{slave robots} (SRs) to accomplish the time-critical and computation intensive tasks while minimizing and balancing the energy consumption of the robots.
\subsection{Related Works}
\subsubsection{Robot Cooperation Networks}
Recent years, \emph{Artificial intelligence} (AI) applications are extensively and effectively used to improve the capabilities of the robots on information processing attributed to their ability to deal with big data with high accuracy \cite{20-khan}. Nevertheless, the application of AI technology causes tremendous computation workloads \cite{17-kim}. In \cite{19-Alsamhi}, \emph{Machine learning} (ML) has been applied to improve the quality of connection and efficient data collection of robots. However, the cooperation among robots are not considered in this work.

To speed up time-critical robotic applications such as AI applications and robot navigation by utilizing \emph{Simultaneous Localization And Mapping} (SLAM) technology, there have been a number of studies on the following three fields \cite{18-ismail}: different types of robots, control architectures, and communication technologies. In the robot field, Boston Dynamics changes people's idea of what robots can do \cite{19-guizzo}. However, relying on limited resources, it is impossible for a single robot to complete large-scale, low-latency tasks. For example, massive environment sensing and surveillance is a typical application of robots. In order to accomplish accurate sensing and timely monitoring, the robots have to collect enough data, communicate with each other and compute cooperatively in an efficient way. 

MRC system takes the advantage of cooperation to overcome resource limitation of a single robot. First, the introduction of controlled mobility has opened up the applications of sparse sensing in \cite{13-Mostofi}, which can be exploited for energy-efficient and non-redundant sensing in MRC networks.
	Secondly, the most significant advantage of such systems lies in the fact that robots can share their collected information and resource management strategies with others by advanced wireless communication technologies. Thus, effective wireless communication among robots becomes one of the most important parts to facilitate the cooperation among robots in the MRC systems. In \cite{19-chen}, to achieve ultra-reliability and ultra-low latency, $p$-persistent real-time ALOHA has been suggested to be the multiple access protocol of the ad-hoc based MRC systems. 

With the improvement of wireless communication technologies, cloud robots have received particular attention \cite{15-kehoe}. In \cite{14-roazielo}, in order to reduce the computational workload, a visual SLAM robotic system has been proposed to utilize a distributed framework where the cloud server takes care of the map optimization and storage information. However, the data transmission latency required by cloud computing is intolerable. Therefore, to reduce the latency, the crux problem is to eastablish a communication paradigm which can facilitate the cooperation among robots in the MRC system. MEC emerges as one of the key technologies of 5G, has become a viable solution for edge-enabled applications \cite{19-liu}. In the next subsection, we introduce the latest development of MEC technology in detail.
\subsubsection{Mobile edge computing}
As a promising communication and computation offloading combination technology, MEC extends computation, communication, and storage resources toward the edge of a network. It has interdisciplinary nature between computation offloading and wireless communication, which plays an important role to reduce energy consumption and latency \cite{20-Hai}. Therefore, MEC is a natural choice to support the cooperation among robots, which can not only provide efficient communications among robots, but also assign tasks among robots effectively. In \cite{15-chen}, to minimize the task implementation latency and energy consumption, the authors have investigated a computing resources allocation scheme at mobile users and MEC servers. In \cite{17-wang}, by jointly considering computation offloading, resource allocation, and content caching strategy, an \emph{Alternating Direction Method of Multipliers} (ADMM)-based algorithm has been proposed to maximize the revenue of \emph{MEC system operator} (MSO). In \cite{17-you}, an energy-efficient task offloading scheme for the multiple devices has been proposed. To overcome network congestion and long latency in cloud computing, a collaborative cloud and edge computing resource management algorithm has been devised in \cite{19-ren}. 
\subsubsection{MEC based Cooperative Communication}
As an important technology to enable the cooperation among robots via the mobile edge server, it has been studied extensively to save the computational resources of the users in the conventional wireless communication systems. When multiple edge servers serve multiple users, centralized and distributed methods have been proposed to solve the optimal task offloading problem in \cite{18-you} and \cite{16-chen}, respectively. Furthermore, a joint offloading and trajectory design scheme has been studied to minimize the sum of maximum delay in a MEC-based UAV system \cite{18-hu}. These works have been considered a cooperation MEC system, but the user in such system was not a robot, so they did not consider unique characteristics of the robot system, such as the ability to interact with the environment.
\subsection{Our Contributions}
This paper elaborates on a MEC-based MRC system to accomplish the computation-intensive and time-critical tasks while minimizing the energy consumption of robots to prolong their function time. In the studied scenario, the intelligent MR is modeled as an edge server to
make a decision on the duration of environment sensing, the amount of data gathered by each SR, and the amount of data offloaded from each SR.
The mobile SRs are mainly in charge of environment sensing and data collection. Then, they can
offload a certain amount of tasks to the MR to save the energy consumption on
the data processing.
Two resource management strategies are proposed to minimize and balance the energy consumption among SRs, while guaranteeing that the tasks can be accomplished in time. The difference is that the remaining energy at the SRs is considered to ensure that the SRs can process the data locally without offloading in the second scheme. Therefore, it is more robust to the unreliable wireless communication environment than the first scheme, but may consume more energy.
The main contributions of the paper are summarized as follows.
\begin{enumerate}
	\item[$\bullet$] An implementation framework of the computation-intensive and time-critical task in the MEC-based MRC systems is proposed. The task implementation process is divided into three stages. In the first stage, the SRs work on environment sensing and data collection. In the second stage, the SRs may offload a certain amount of data to the MR and process the rest of the data locally.
	In the third stage, the MR needs to process the data and may also upload a certain amount of data to the \emph{base station} (BS).
	The MR makes decision on the time spent on sensing, data offloading and power scheme at the SRs at the beginning of the task implementation.
	\item[$\bullet$] Two resource management and offloading strategies are developed to ensure that the tasks can be done in time while minimizing the energy consumption of the robots. In the first scheme, the energy consumption of the SRs is minimized and balanced to enhance the function time of the MRC systems. In the second scheme, the remaining energy of the SRs is also considered to cope with the variable wireless communication environment. As a consequence, each SR may process its entire collected data locally without offloading to the MR. Therefore, it is more robust than the first scheme, but may cost more energy.
	\item[$\bullet$] Simulation results are present to verify the effectiveness of the proposed schemes, while the robustness of the MRC system is analyzed as well.
\end{enumerate}

The rest of this paper is organized as follows. In Section~\ref{System model}, we introduce the MEC-based MRC system model. Section~\ref{Problem} presents the framework and the corresponding problem formulation. Two resource management strategies are developed in Section~\ref{Optimal} and Section~\ref{suboptimal}, respectively. Numerical results are provided to verify the proposed schemes in Section~\ref{simulation}. The paper is concluded in Section~\ref{CONCLUSION}.
\section{System model}
\label{System model}
In the paper, the time-critical tasks implementation process in MEC-based MRC systems is studied. As shown in Figure~\ref{Fig.1}, a MR, denoted by $M$, works together with $K$ SRs, denoted by a set of $\mathcal{S}=\{s_1,s_2,...,s_{K}\}$ and one BS. The MR leads multiple SRs to accomplish the latency-sensitive tasks such as object identification and tracking jobs. Basically, these tasks are data-driven and computation-intensive, which requires the SRs to collect data first. Therefore, the SRs focus on sensing and collecting the data in the environment. On the other hand, the MR is more ``intelligent'' and has more energy and powerful computational capability to process the data. Hence, the MR sends the order on the amount of data required to be collected to each SR. Then, after collecting the data, each SR would transmit a certain amount of data wirelessly to the MR, which is also decided by the MR. Moreover, since the received data at the MR may be overwhelming, it has to decide whether to offload the data to the remote BS and how much to be offloaded. Hence, the MR acts as an edge server to balance the power consumption and computation load among SRs. After the data processing, the BS and SRs send the results to the MR which will respond to the human being and feedback the new order to SRs based on the comprehensive results from itself and the BS.
\begin{figure}[!t] \centering
	\begin{centering}
		\includegraphics[width=0.45\textwidth]{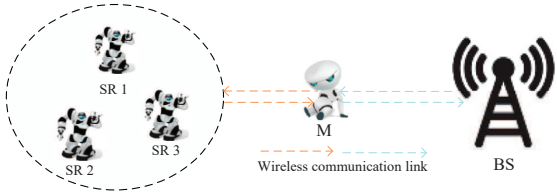}
		\caption {The model for a multi-robot system.}\label{Fig.1}
	\end{centering}
\end{figure}

As shown in Figure~\ref{Fig.2}, the process of implementing the time-critical tasks is divided into three stages. The first stage is called the sensing stage in which each SR interacts with the environment to collect data with a duration of $T_s$.
With a time limitation $T_s^{(cmp)}$, the SRs will process a certain amount of data locally and the rest of data will be offloaded to the MR. Therefore, the second stage is named as SR offloading stage. In the third stage, the MR will decide how much data to be processed by itself and how much data to be offloaded to the BS after receiving data from SRs with a duration of $T_M^{(cmp)}$, which is named as MR offloading stage.

On the one hand, efficient wireless communication among MR, SRs and the BS is the key to offload the data and accomplish the task in time when applying the MEC technology.
To guarantee reliable communication, the SRs will feedback the information on the channel state and the remaining battery power over the control channel to the MR at the beginning. Accordingly, the MR decides the amount of the data required to be collected at each SR, transmission power and spectrum allocation of each SR, and the amount of data offloaded from each SR. Then, it feedbacks the decision to the SRs over the control channel. At the beginning of the third stage,
the MR measures and estimates the channel state information between the MR and the BS first. Then, it decides the amount of the data offloaded to the BS.

On the other hand, as a group of specific task execution units, SRs may equip with different devices to interact with the environment such as the camera and Lidar. Furthermore, the SRs have much fewer battery capacities and computational capabilities than the MR and most energy would be consumed on the sensing and collecting the data in the first stage.
To maintain the function of the MRC system as long as possible, the power consumption of the SRs should be minimized and
balanced. To reach this goal, the mathematical model of the system is described in detail
in the following subsections.
\begin{figure}[h] \centering
	\begin{centering}
		\includegraphics[width=0.45\textwidth]{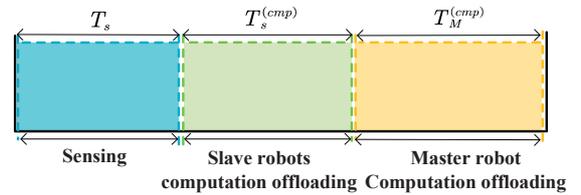}
		\caption {The time division structure for system model.}\label{Fig.2}
		\vspace{-0.3cm}
	\end{centering}
\end{figure}
\subsection{Data Collection Model}
\label{data}
The maximum duration of the sensing stage is denoted by $T_s$ in which the SRs interact with the environment to collect data.
In particular, the time spent on the sensing and collection at the $k$th SR is denoted as
$t_{k}^s$ seconds, which is a variable decided by the MR. The amount of collected data at the $k$th SR is denoted as $D_k=\frac{p_k^st_k^s}{e_k^s}$ bits, where $e_k^s$ represent the energy consumption to sense one bit at the $k$th SR \cite{03-zhu}, $p_k^s$ is the sensing power consumption at the $k$th SR, both of $e_k^s$, $p_k^s$ are constants.

\subsection{Local computation and offloading model}
\label{offloading}
Considering that the time-critical tasks are computation-intensive, the local computation at each SR may be inefficient to finish the job under a strict delay constraint. For example, the latest deep learning model, such as yolo-v4 and EfficientDet, may not be able to achieve 60 FPS with GPU V100, let alone embedded devices deployed in SRs. Thus offloading computation tasks is an efficient and necessary way to save energy at the SRs and prolong the function time of the MRC systems.

As shown in Figure~\ref{Fig.3}, define $D_k$ as the collected data which needs to be processed locally or offloaded to the MR. Define $D_k^{(off)}$ as the offloaded bits from the $k$th SR to the MR.
Then, ($D_k-D_k^{(off)}$) bits are the data bits that need to be processed locally. Moreover, let $U_k$ denote the computation capability of the $k$th SR which is the number of \emph{central processing unit} (CPU) cycles executed per second, $C_k$ as the number of CPU cycles required for computing 1-bit data at the $k$th SR.
To guarantee the latency constraint, the time spent on local computing should
be less than a threshold, which is given by
\begin{equation}\label{eq1}
	\frac{(D_k-D_k^{(off)})C_k}{U_k} \le T_{s}^{(cmp)},
\end{equation}
where $T_{s}^{(cmp)}$ is the threshold for local computing and offloading. Then we have
\begin{equation}\label{eq2}
	D_k^{(off)}\ge\max\{G_k,0\},
\end{equation}
where $G_k=D_k-\frac{U_kT_s^{(cmp)}}{C_k}$. It represents the constraint on the minimum offload data bits at the $k$th SR.

Similarly, when the MR receives the offloaded data from SRs, it also needs to offload a
certain amount of data to the BS to save both energy and time.
Define $D_M=\sum_{k\in\mathcal{S}}D_k^{(off)}$ as the total received data
from SRs, $U_M$ as the computation capacity of the MR, $C_M$ as the number of CPU cycles required for computing
1-bit of data at the MR. Then, the range of offloaded data bits from the MR to
the BS is given by
\begin{equation}\label{eq3}
	D_M^{(off)}\ge\max\{G_M,0\},
\end{equation}
where $G_M=D_M - \frac{U_MT_{s}^{(cmp)}}{C_M}$ is the minimum offload data bits at the MR.
\subsection{Transmission Model}
\label{transmission}
To offload the data to the MR from the SRs in the second stage and to the BS from the MR
in the third stage, wireless communication is applied.
According to Shannon formula, the transmission data rate from the $k$th SR
to the MR is given by
\begin{equation}\label{eq4}
	r_k = B_k\log_2(1+\frac{p_kh_k}{N_k}),
\end{equation}
where $B_k$ is the channel bandwidth allocated to the $k$th SR, $p_k$ is the transmission power consumed at SR $k$, $h_k$ denotes the mean channel gain on the wireless channel between SR $k$ and the MR, $N_k$ is the variance of complex white Gaussian channel noise experienced by SR $k$.
Based on \eqref{eq4}, the time spent on offloading data from the $k$th SR to the MR can be written as
\begin{equation}\label{eq5}
	t_k^{(off)}=\frac{D_k^{(off)}}{r_k}.
\end{equation}

It is noteworthy that the channels between the SRs and MR experience both large fading due to the path loss
and shadowing and fast fading due to the reflection and diffraction. It is given by
\begin{equation}\label{eq6}
	 h_k=15+a\log_{10}(d),
\end{equation}
where $d$ is the distance between the SR and MR, $a$ is a constant. When the system uses the licensed band, $a=3.75$.

Similarly, when the MR offloads the data to the BS, the transmission rate is given by
\begin{equation}\label{eq7}
	r_M = B_M\log_2(1+\frac{p_Mh_M}{N_1}),
\end{equation}
where $B_M$ is the bandwidth allocated to the MR, $p_M$ is
transmission power allocation at the MR, $h_M$ is the mean channel gain between the
MR and the BS, $N_1$ is the noise power, which is fixed.
According to \eqref{eq7}, the time spent on the transmission at the MR is given by
\begin{equation}\label{eq8}
	t_M^{(off)}=\frac{D_M^{(off)}}{r_M}.
\end{equation}
\begin{figure}[!t] \centering
	\begin{centering}
		\includegraphics[width=0.45\textwidth]{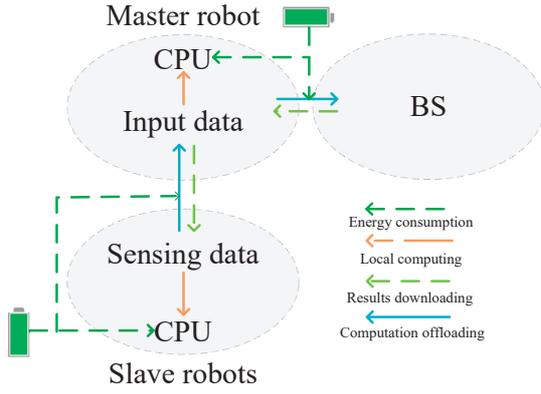}
		\caption {Data and energy flows of the system.}\label{Fig.3}
	\end{centering}
\end{figure}
\subsection{Energy Consumption Model}
\label{energy}
As shown in Figure~\ref{Fig.3}, the total energy consumption in the system can be divided into four parts, which includs the power consumption on sensing, local computing, circuit and communication. Then, according to Section~\ref{data}, the energy consumption on sensing is given by
\begin{equation}\label{eq9}
	E_k^s=p_k^st_k^s.
\end{equation}

During the local computation and data offloading at the SRs, the energy consumption on local computation is denoted by $E_{k,c}$.  Let $p_k^{(cmp)}$ be the power consumption per CPU computing cycle at the $k$th SR. Based on Section~\ref{offloading}, the data bits which need to be processed locally at the $k$th SR is $(D_k-D_k^{(off)})$.
Accordingly, the energy consumption on local computation at the $k$th SR is given by
\begin{equation}\label{eq10}
	E_k^{(cmp)}=(D_k-D_k^{off})C_kp_k^{(cmp)}.
\end{equation}

Similarly, when the MR executes the computation locally, the energy consumption is
\begin{equation}\label{eq11}
	E_{M}^{(cmp)}=(D_M-D_M^{off})C_Mp_M^{(cmp)},
\end{equation}
where $p_M^{(cmp)}$ is the power consumption per CPU cycle of computing at the MR.

The third part of the energy is consumed on data transmission at SRs and MR. We define a function $f(x)=N_k(2^{\frac{x}{B}}-1)$, and according to \eqref{eq5}, the energy consumed on data transmission at the $k$th SR is given by
\begin{equation}\label{eq12}
	E_k^{(tr)}=p_kt_k^{(off)}=\frac{t_k^{(off)}}{h_k}f(\frac{D_k^{(off)}}{t_k^{(off)}}).
\end{equation}

Similarly, the energy consumption on data transmission at the MR is given by
\begin{equation}\label{eq13}
	E_M^{(tr)}=p_Mt_M^{(off)}=\frac{t_M^{(off)}}{h_M}f(\frac{D_M^{(off)}}{t_M^{(off)}}).
\end{equation}

Based on the above analysis, the total energy consumption at the $k$th SR during the task implementation
is given by
\begin{equation}\label{eq14}
	E_k^{(tot)}=E_k^s+E_k^{(cmp)}+E_k^{(tr)}+p_k^c(t_k^s+t_k^{(off)}),
\end{equation}
where $p_k^c$ is the circuit power consumed at the $k$-th SR, which is a constant.

The total energy consumption at the MR during the task implementation is written as
\begin{equation}\label{eq15}
	E_M^{(tot)}=E_M^{(cmp)}+E_M^{(tr)}+p_M^c(t_M^{(off)}),
\end{equation}
where $p_M^c$ is the circuit power consumed at the MR, which is a constant.

It should be noted that the BS has sufficient computational capability and power, hence the computation
time spent on the feedback from the BS and the energy cost at the BS are neglected in this paper.

\section{Problem Formulation}
\label{Problem}
The aim of the proposed MEC-based MRC system is to accomplish time-critical tasks while
maintaining the function of the system as long as possible. To reach this goal,
first, we need to guarantee that the SRs collect enough data, then offload some of them to the MR in the second stage while computing the left workloads locally.
Accordingly, the optimization problem for the SRs is formulated as
\begin{equation}
	\rm{(P1) : } \hspace{-1cm}
	\begin{split}
		\min_{\{\mathbf{t}_K^s,\mathbf{D}_K^{(off)},\mathbf{t}_K^{(off)}\}} \max_{k\in\mathcal{S}}(E_k^{(tot)})
	\end{split}
\end{equation}
\vspace{-1cm}
\begin{align}
	s.t.\quad\quad
	\sum\limits_{k=1}^K D_k \geq& D, \tag{16a} \label{16a}\\
	\frac{(D_k-D_k^{(off)})}{U_k} \le& T_s^{(cmp)}, k\in\mathcal{S},\tag{16b} \label{16b}\\
	E_k^{(Re)}-E_k^{(tot)}\geq& 0, k\in\mathcal{S},\tag{16c} \label{16c} \\	
	T_s \geq t_k^s \geq& 0 ,k\in\mathcal{S},\tag{16d} \label{16d}\\
	T_s^{(cmp)} \geq t_k^{(off)} \geq&0 ,k\in\mathcal{S},\tag{16e} \label{16e}\\
	D_k \geq  &0, k\in\mathcal{S},\tag{16f} \label{16f}\\
	D_k^{(off)}\geq &0, k\in\mathcal{S},\tag{16g} \label{16g}
\end{align}
where $\mathbf{t}_K^s$ is the sensing
time vector of SRs with $\mathbf{t}_K^s=[t_i^s]_{i=1}^{K}$, $\mathbf{D}_K^{(off)}$ is the data offloading
vector of SRs with $\mathbf{D}_K^{(off)}=[D_i^{(off)}]_{i=1}^{K}$,
$\mathbf{t}_K^{(off)}$ is the time vector spent on data offloading with $\mathbf{t}_K^{(off)}=[t_i^{(off)}]_{i=1}^{K}$,
$E_k^{(Re)}$ denotes the remaining battery energy at the $k$th SR.

Constraint \eqref{16a} is to guarantee that the SRs collect enough data to accomplish
the task, \eqref{16b} is to limit the time spent on the local computation at each SR, \eqref{16c} is to guarantee that
the total power consumption at the $k$th SR is less than its remaining power,
constraints \eqref{16d} and \eqref{16e} are the time constraints for the first and second stages, respectively.

In the third stage, when the MR is working, it needs to decide how much data should be offloaded to the
BS to minimize the energy consumption while guaranteeing that the task can be finished in time.
Accordingly, the problem is written as
\begin{equation}\label{eq17}
	\rm{(P2) :}   \hspace{-1cm}
	\begin{split}
		\min_{\{t_M^{(off)},D_M^{(off)}\}}\quad E_M^{(tot)}\
	\end{split}
\end{equation}
\vspace{-0.8cm}
\begin{align}
	s.t.\quad\quad
	E_M^{(Re)}-E_M^{(tot)}\geq& 0,\tag{17a} \label{17a}\\	
	T_M^{(cmp)} \geq t_M^{(off)} \geq& 0, \tag{17b} \label{17b}\\
	\frac{(D_M-D_M^{(off)})}{U_M} \leq& T_M^{(cmp)},\tag{17c} \label{17c}\\
	D_M \geq D_M^{(off)}\geq&  0 ,\tag{17d} \label{17d}
\end{align}
where \eqref{17a} is to guarantee that the total energy consumption at the MR is less than its remaining battery energy, \eqref{17b} represents the maximum latency constraint,
\eqref{17c} denotes that the local computation time should be less
than $T_M^{(cmp)}$, and \eqref{17d} is to guarantee that the offloaded data bits are less than the total received data bits from SRs.
\section{minimizing the system energy consumption}
\label{Optimal}
In this section, we design an optimal task offloading, transmission power, and time allocation scheme for the SRs first.
\subsection{An Optimal Scheme for Solving the Problem (P1) and Problem (P2)}

According to \cite{convex_boyd}, define a variable $e$ to replace the objective function in (P1). Then, the (P1) is converted into:
\begin{equation}\label{eq18}
	\rm{(P3) :} 
	\begin{split} \hspace{-1cm}
		\min_{\{\mathbf{t}_K^s,\mathbf{D}_K^{(off)},\mathbf{t}_K^{(off)},e\}} e
	\end{split}
\end{equation}
\vspace{-0.8cm}
\begin{align}
	s.t.\quad\quad E_k^{(tot)} \leq e, k\in\mathcal{S},\tag{18a} \label{18a}\\
	(16a)-(16g).\notag
\end{align}

Then, we have the following Lemma.
\begin{lemma}
	Problem (P3) is a convex optimization problem.
\end{lemma}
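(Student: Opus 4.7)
The plan is to show (P3) has a linear objective over a convex feasible set by examining each constraint in turn. The objective is just $e$, which is linear and hence convex. Using the relation $D_k = p_k^s t_k^s / e_k^s$ from Section~\ref{data}, each $D_k$ is an affine function of $t_k^s$, so constraints \eqref{16a}, \eqref{16b}, and \eqref{16d}--\eqref{16g} are all affine in the optimization variables and pose no difficulty.

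The substantive step is to establish that $E_k^{(tot)}$, as defined in \eqref{eq14}, is a convex function of $(t_k^s, D_k^{(off)}, t_k^{(off)})$, which would immediately make both \eqref{18a} (i.e.\ $E_k^{(tot)} - e \leq 0$) and \eqref{16c} (i.e.\ $E_k^{(tot)} - E_k^{(Re)} \leq 0$) convex sublevel-set constraints. Decompose $E_k^{(tot)}$ into its four parts. The sensing energy $E_k^s = p_k^s t_k^s$, the computing energy $E_k^{(cmp)} = (D_k - D_k^{(off)})C_k p_k^{(cmp)}$, and the circuit term $p_k^c(t_k^s + t_k^{(off)})$ are all affine. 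The only nontrivial term is the transmission energy
\begin{equation*}
E_k^{(tr)} = \frac{t_k^{(off)}}{h_k}\, f\!\left(\frac{D_k^{(off)}}{t_k^{(off)}}\right), \qquad f(x)=N_k\bigl(2^{x/B_k}-1\bigr).
\end{equation*}

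The key observation is that $E_k^{(tr)}$ is, up to the positive constant $1/h_k$, the perspective of $f$. I would first verify that $f$ is convex on $\mathbb{R}$ by computing $f''(x) = N_k (\ln 2 / B_k)^2\, 2^{x/B_k} > 0$. Then I would invoke the standard result that if $f:\mathbb{R}\to\mathbb{R}$ is convex, its perspective $g(x,t) = t f(x/t)$ is jointly convex on $\mathbb{R}\times\mathbb{R}_{++}$ (Boyd \& Vandenberghe, Ch.~3). Applying this with $x = D_k^{(off)}$ and $t = t_k^{(off)}$ gives joint convexity of $E_k^{(tr)}$ in $(D_k^{(off)}, t_k^{(off)})$. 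Summing the affine terms and this convex term yields convexity of $E_k^{(tot)}$ in the decision vector. Combining with the linear constraints identified above, the feasible set of (P3) is convex, and since the objective is linear, (P3) is a convex program.

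The main obstacle will be dealing cleanly with the boundary $t_k^{(off)} = 0$, at which the perspective formula is not literally defined. I would address this either by restricting the feasible region to $t_k^{(off)} > 0$ whenever $D_k^{(off)} > 0$ (justified because $D_k^{(off)}>0$ with $t_k^{(off)}=0$ is infeasible anyway via \eqref{eq5}), or by taking the standard closure $\lim_{t\to 0^+} t f(x/t)$ at $x=0$, which equals $0$, giving a convex lower-semicontinuous extension. With this technicality handled, every constraint is either affine or a convex sublevel set, so the claim follows.
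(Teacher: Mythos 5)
Your proposal is correct and follows essentially the same route as the paper's own proof: the paper likewise reduces everything to observing that $E_k^{(tr)}$ is (up to the constant $1/h_k$) the perspective of the convex function $f$, hence jointly convex in $(D_k^{(off)},t_k^{(off)})$, with all remaining terms and constraints affine. Your additional care with the second-derivative check of $f$ and the boundary case $t_k^{(off)}=0$ only makes the argument more complete than the paper's version.
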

\begin{proof}
	Please refer to Appendix \ref{A}.
\end{proof}
By analyzing the problem (P3), the energy consumption of the SRs in the first and second stages are coupled due to the constraints
\eqref{16c} and \eqref{18a}. To decouple these two constraints, the Lagrangian dual method is applied.  Define Lagrange multipliers, $\boldsymbol{\lambda}=[\lambda_1,...,\lambda_n]^T, \boldsymbol{\alpha}=[\alpha_1,...,\alpha_k]^T$, associated with the constraints \eqref{16c} and \eqref{18a}, respectively. Then, Lagrange dual function corresponding to \eqref{eq18} is given by:
\begin{eqnarray}\label{eq19}
	\begin{split}
		L= e+&\sum_{k=1}^K\lambda_k(E_k^{(tot)}-E_k^{(Re)})\\
		+&\sum_{k=1}^K\alpha_k(E_k^{(tot)}-e).
	\end{split} 	
\end{eqnarray}

According to \eqref{eq19}, the dual problem is defined as:
\begin{equation} \label{eq20}
	\begin{split}
		&\max_{\{\boldsymbol{\lambda}\geq0,\boldsymbol{\alpha}\geq0\}} D(\boldsymbol{\lambda},\boldsymbol{\alpha})\\
		=&\max_{\{\boldsymbol{\lambda}\geq0,\boldsymbol{\alpha}\geq0\}} \{ \min_{\{\mathbf{t}_K^s,\mathbf{D}_K^{(off)},\mathbf{t}_K^{(off)},e\}} L\},
	\end{split}
\end{equation}
which is convex on $\boldsymbol{\lambda}$ and $\boldsymbol{\alpha}$. Based on Lagrange dual method, \eqref{eq20} can be decoupled into
two problems, which are the Lagrange dual problem, $\max_{\boldsymbol{\lambda}\geq0,\boldsymbol{\alpha}\geq0}D(\boldsymbol{\lambda},\boldsymbol{\alpha})$, and the sub-prime problem:
\begin{equation} \label{eq21}
	\begin{split}
		&\min_{\{\mathbf{t}_K^s,\mathbf{D}_K^{(off)},\mathbf{t}_K^{(off)},e\}} L(\mathbf{t}_K^s,\mathbf{D}_K^{(off)},\mathbf{t}_K^{(off)},\boldsymbol{\lambda},\boldsymbol{\alpha},e)\\
		&=\min_{\{\mathbf{t}_K^s,\mathbf{D}_K^{(off)},\mathbf{t}_K^{(off)},e\}}
		((1-\sum_{k=1}^K{\alpha_k})e+\sum_{k=1}^K(\alpha_k+\lambda_k)E_k^s\\
		&+\sum_{k=1}^K(\alpha_k+\lambda_k)(E_k^{(cmp)}+E_k^{(tr)})\\
		&-\sum_{k=1}^K\lambda_kE_k^{(Re)})
	\end{split}
\end{equation}

After decoupling, the sub-problem can be decomposed into three sub-problems, named as (SP1), (SP2), and (SP3).
(SP1) is to minimize the weighted energy consumption of SRs during the sensing stage, which is given by
\vspace{-0.3cm}
\begin{equation}\label{eq22}
	\rm{(SP1):} \hspace{-1.2cm}
	\begin{split}
		\min_{\{\mathbf{t}_K^s,\mathbf{D}_K\}}	\sum_{k=1}^K(\alpha_k+\lambda_k)E_k^s
	\end{split}
\end{equation}
\vspace{-0.8cm}
\begin{align}
	s.t.\quad\quad (16a), (16d), (16f)\notag.
\end{align}

(SP2) is to minimize the weighted energy consumption of SRs during the local computation and
task offloading at the second stage, which is given by
\vspace{-0.3cm}
\begin{equation}\label{eq23}
	\rm{(SP2) :} \hspace{-0.5cm}
	\begin{split}
		\min_{\{\mathbf{D}_K^{(off)},\mathbf{t}_K^{(off)}\}} \sum_{k=1}^K(\alpha_k+\lambda_k)(E_k^{(cmp)}+E_k^{(tr)})
	\end{split}
\end{equation}
\vspace{-0.8cm}
\begin{align}
	s.t.\quad\quad (16b), (16e), (16g).\notag
\end{align}

The optimal maximum total energy consumption among SRs can be obtained by solving the following sub-problem:
\vspace{-0.3cm}
\begin{equation}\label{eq24}
	\rm{(SP3) :}  \hspace{-1.2cm}
	\begin{split}
		\min_{\{e\}}\quad (1-\sum_{k=1}^K{\alpha_k})e
	\end{split}
\end{equation}
\vspace{-1cm}
\begin{align}
	s.t.\quad\quad e\geq 0 .\tag{24a} \label{24a}
\end{align}

\begin{algorithm}[H]
	\caption{Minimizing the System Energy Consumption}
	\begin{algorithmic}[1]
		\STATE \textbf{Initialize:} Iteration index $t=1$, and system parameters.
		\STATE \textbf{While:} Problem has a feasible solution:
		\STATE \quad With the current $\lambda_k$, the Interior Point Method and BCD Method can be applied to the (SP1), (SP3) and (SP2) to obtain the globally optimal solution $\{D_{t,k},t_{t,k}^s,D_{t,k}^{(off)},t_{t,k}^{(off)}\}$;
		\STATE \quad Use \emph{Gradient Descent (GD)} Method to update paramters $\boldsymbol{\lambda}_k$;
		\STATE \quad Return to step 4 until convergence;
		\STATE \quad Compute total energy consumption $\mathbf{E}_{t,k}^{(tot)}$;
		\STATE \quad Update the remaining energy of each slave robots, i.e. $E_{t,k}^{Re}=E_{t,k}^{Re}-E_{t,k}^{(tot)}$;
		\STATE \quad Solve problem (P2) by BCD Method and get the optimal solution $\{D_{t,M}^{(off)},t_{t,M}^{(off)}\}$;
		\STATE \quad Compute energy consumption $E_{t,M}^{(tot)}$ at the MR;
		\STATE \quad Update the remaining energy of the MR, i.e., $E_{t,M}^{Re}=E_{t,M}^{Re}-E_{t,M}^{(tot)}$;
		\STATE \quad Update $t=t+1$;
	\end{algorithmic}
\end{algorithm}\label{alg1}
With Lagrange multipliers, $\boldsymbol{\lambda}$, $\boldsymbol{\alpha}$, (SP1), (SP2) and (SP3) are all convex optimization problems. In particular, (SP1) and (SP3) are linear programming problems, which can be solved by the Interior Point Method.
The \emph{Block Coordinate Descent} (BCD) \cite{BCD} optimization technique can be used to obtain the optimal solution of (SP2).

After achieving the optimal solution of the sub-prime problem, the Lagrange dual problem is a linear programming problem, which is formulated as:
\begin{equation}\label{eq25}
	\begin{split}
		&\max_{\{\boldsymbol{\lambda},\boldsymbol{\alpha}\}}((1-\sum_{k=1}^K\alpha_k)e^*+\sum_{k=1}^K(\alpha_k+\lambda_k)E_k^{s*}\\
		&+\sum_{k=1}^K(\alpha_k+\lambda_k)(E_k^{(cmp)*}+E_k^{(tr)*})\\
		&-\sum_{k=1}^K\lambda_kE_k^{(Re)})
	\end{split}
\end{equation}
\vspace{-1cm}
\begin{align}
	s.t. \quad \lambda_k \geq0 ,k\in\mathcal{S} ,\tag{25a} \label{25a}\\
	\alpha_k \geq 0 ,k\in\mathcal{S} ,\tag{25b} \label{25b}
\end{align}
where $e^*,E_k^{s*},E_k^{(cmp)*},E_k^{(tr)*}$ are the achieved optimal solutions of (SP1), (SP2) and (SP3). The updating rule in the following algorithm $1$ can be applied to derive the optimal $\boldsymbol{\lambda},\boldsymbol{\alpha}$.

Based on the Interior Point Method, the solution of (P2) can be achieved and the optimal task offloading and resource management scheme can be derived for the MR since it is a convex optimization problem. According to the above analysis, the optimal joint computation and communication resource scheme is concluded in Algorithm 1.

\section{The robust task offloading and resource management scheme}
\label{suboptimal}

In problem (P1), only the fairness on the energy consumption among SRs is considered, which does not take the remaining energy of the SRs into account. In this section, the task offloading and resource management policies of the SRs at the first (sensing) and second (SR offloading) stages are studied separately, which considers not only the energy consumption,
but also the remaining energy of the SRs. First, the optimization problem for the first stage is formulated as
\begin{equation}\label{eq26}
	\rm{(P4) :} \hspace{-1.2cm}
	\begin{split}
		\min_{\{\mathbf{t}_{k}^s\}}\sum_{k=1}^K (\beta_k E_k^s)
	\end{split}
\end{equation}
\vspace{-0.8cm}
\begin{align}
	s.t.\quad\quad
	E_k^{Re}-D_kC_kp_k^{(cmp)}-E_k^s\geq& 0, k\in\mathcal{S}\tag{26a} \label{26a},\\
	(16a),(16d),(16f), \notag
\end{align}
where \eqref{26a} ensures that the remaining battery energy is enough to implement data collection and local computation without offloading. $\beta_k$ is the weighted factor associated with the $k$th SR, which is related to the fairness on the remaining energy. It is given by
\begin{equation}\label{eq27} 	
\beta_k = \frac{\min(E_k^{Re})}{E_k^{Re}},k\in\mathcal{S}.
\end{equation}

Intuitively, when the remaining energy of the SR is less, its corresponding weight factor will be bigger based on \eqref{eq27}.
Then, based on the definition of the objective function in (P4), less amount of data collection will be assigned to
the SR.

It is noteworthy that the constraint \eqref{26a} is different from the constraint \eqref{16c} in problem (P1). The constraint
\eqref{16c} is to ensure that the remaining power at each SR is enough to accomplish the sensing, local computation and offloading. However, in certain scenarios, the wireless channel is not good enough to support the data transmission between the SRs and MR, i.e. the SRs enter the dead zone. We have to make sure the SRs can handle the collected data by themselves under this condition. Therefore, constraint \eqref{16c} is replaced with \eqref{26a} in (P4).

Then, in the SR offloading stage, we need to minimize and balance the energy consumption among SRs. Therefore, the problem is formulated as
\begin{equation}
	\rm{(P5) :} \hspace{-1cm}
	\begin{split}
		\min_{\{\mathbf{t}_{k}^{(off)},\mathbf{D}_k^{(off)}\}} \sum_{k=1}^K(E_k^{(cmp)}+E_k^{(tr)})
	\end{split}
\end{equation}
\vspace{-0.8cm}
\begin{align}
	s.t.\quad\quad
	(16b),(16e),(16g).\notag
\end{align}

Problem (P4) is a linear programming problem, which can be easily solved by the Interior Point Method. Moreover, problem (P5) is a convex optimization problem. We apply the Lagrangian multiplier method \cite{convex_boyd} to derive the close-form solution of (P5).

The partial Lagrange function of (P5) is defined as
\begin{equation}\label{eq29}
	\begin{split}
		L=&\sum_{k=1}^K(\frac{t_{k}^{(off)}}{h_k}f(\frac{D_k^{(off)}}{t_{k}^{(off)}})+(D_k-D_k^{(off)})C_kp_k^{(cmp)})\\
		&+\sum_{k=1}^K\phi_k(t_k^{(off)}-T_s^{(cmp)}),
	\end{split}
\end{equation}
where $\boldsymbol{\phi_k}$, $\boldsymbol{\phi_k}\ge0$, is the Lagrange multiplier associated with the constraint \eqref{16e}. In order to facilitate the subsequent analysis, we define $g(x)=f(x)-xf^{'}(x)$.

Then \emph{Karush-Kuhn-Tucker} (KKT) conditions for (P5) \cite{convex_boyd} leads to the following necessary and sufficient conditions:
\begin{equation}\label{eq30} 		
	\begin{split}
		\frac{\partial L}{\partial D_k^{(off)*}} =& \\
		\frac{f^{'}(\frac{D_k^{(off)*}}{t_{k}^{(off)*}})}{h_k}&-C_kp_k^{(cmp)}	\begin{cases}
			\textgreater 0, & D_k^{(off)*}=G_k \\
			=0, &D_k^{(off)*} \in (G_k,D_k) \\
			\textless0, &D_k^{(off)*} = D_k \\
		\end{cases},
	\end{split}
\end{equation}
\begin{equation}\label{eq31} 	
	{\frac{\partial L}{\partial t_{k}^{(off)*}} = \frac{g(\frac{D_k^{(off) *}}{t_{k}^{(off)*}})}{h_k}+\phi_k^*}	\begin{cases}
		\textgreater 0, & t_{k}^{(off)*}=0 \\
		=0, & t_{k}^{(off)*}>0 \\
	\end{cases},
\end{equation}
\begin{equation}\label{eq32}
	{{t_k^{(off)*}}-T_s^{(cmp)} \leq 0,k\in\mathcal{S}},
\end{equation}
\begin{equation}\label{eq33}
	{\phi_k^*({t_k^{(off)*}}-T_s^{(cmp)})=0,k\in\mathcal{S}},
\end{equation}
where $\{D_{k}^{(off)*},t_{k}^{(off)*}\}$ denote the optimal solution for (P5).

Based on the knowledge that $D_k^{(off)*} \in (G_k,D_k)$ and $t_{k}^{(off)*}>0$, we derive the \eqref{eq34} as below from \eqref{eq30} and \eqref{eq31}:
\begin{equation}\label{eq34}
	{\frac{D_k^{(off)*}}{t_{k}^{(off)*}}=f^{'-1}(C_kp_k^{(cmp)}h_k)=g^{-1}(-h_k\phi_k^*)}.
\end{equation}

By solving \eqref{eq34} and the definition of $f(x)$ and $g(x)$, we derive a threshold function $\gamma_k$ as follows
\begin{equation}\label{eq35} 	
	\begin{split}
		\gamma_k&=g(f^{'-1}(C_kp_k^{(cmp)}h_k))\\
		&=	
		\begin{cases}
			&\frac{N_k}{h_k}(O_klnO_k-O_k+1),  ~~O_k\geq1,\\
			&0,  ~~~~~~~~~~~~~~~~~~~~~~~~~~~~O_k<1,
		\end{cases}
	\end{split}
\end{equation}
the equation \eqref{eq35} is derived and its properties are proved in Appendix \ref{B}.

\subsection{Offloading Policy}
In this subsection, an threshold-based policy for solving the (P5) is described in detail. First, a priority indicator $O_k$ is defined as follow
\begin{eqnarray}\label{eq36} 	
	\small 	O_k=\frac{B_kh_kC_kp_k^{(cmp)}}{N_k\ln2}, k\in\mathcal{S}.
\end{eqnarray}

Note that $O_k$ depends on the corresponding parameters quantifying uplink channel $h_k$, channel bandwidth $B_k$, and robots local computation capacity $C_kp_k^{(cmp)}$. Obviously, $O_k$ is a monotonically increasing function of $B_k,h_k,C_k,$ and $p_k^{(cmp)}$. Next, the relationship between the optimal time fraction, $t_{k}^{(off)*}$, and offloading data size, $D_{k}^{(off)*}$, for the $k$th SR with the corresponding offloading priority indicator is shown in the following Theorem.
\begin{theorem}(offloading strategy for problem (P5))
	\begin{enumerate}
		\item If $O_k\leq1$, $D_k^{(off) *}=t_k^{(off)*}=G_k, k\in\mathcal{S},$\\
		specifically, if $G_k\leq0$, $D_k^{(off) *}=t_k^{(off)*}=0$.
		\item If $O_k\textgreater1$, $D_k^{(off) *}
		\begin{cases}
			=G_k,~~~~~~~ \gamma_k\textless\phi_k^*, k\in\mathcal{S}\\
			\in[G_k,D_k], 	~\gamma_k=\phi_k^*, k\in\mathcal{S}\\
			=D_k,~~~~~~~ \gamma_k\textgreater\phi_k^*, k\in\mathcal{S}
		\end{cases}$,
	\end{enumerate}
	and time fraction $t_k^{(off)*}$ is expressed as follow
	\begin{eqnarray} \label{eq37}
		\small 	t_k^{(off)*}=\frac{B_k[W_0(\frac{-h_k\gamma+N_k}{-N_ke})+1]}{ln2}D_k^{(off) *}.
	\end{eqnarray}
\end{theorem}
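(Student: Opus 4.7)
The plan is to build directly on the KKT conditions \eqref{eq30}--\eqref{eq33} that the authors have already written down, and to translate them into explicit structural statements by exploiting the closed forms of $f$, $f'$ and $g$. Since $f(x)=N_k(2^{x/B_k}-1)$, we have $f'(x)=\frac{N_k\ln 2}{B_k}\,2^{x/B_k}$, so the first-order interior condition $f'(D_k^{(off)*}/t_k^{(off)*})=C_kp_k^{(cmp)}h_k$ is equivalent to $2^{D_k^{(off)*}/(B_k t_k^{(off)*})}=O_k$. This yields a unique positive ratio $D_k^{(off)*}/t_k^{(off)*}=B_k\log_2 O_k$ precisely when $O_k>1$, which immediately splits the analysis into the two cases stated in the theorem.

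For Case~1 ($O_k\le 1$), I would observe that $f'$ is strictly increasing and $f'(0)=\frac{N_k\ln 2}{B_k}\ge C_kp_k^{(cmp)}h_k$, so $\partial L/\partial D_k^{(off)}\ge 0$ everywhere on the feasible interval and the minimizer is forced to the lower boundary of constraint \eqref{eq2}, giving $D_k^{(off)*}=\max\{G_k,0\}$; transmission yields no marginal benefit, so $t_k^{(off)*}$ collapses to the same value recorded in the theorem. For Case~2 ($O_k>1$), at the interior critical ratio both \eqref{eq30} and \eqref{eq31} hold, and substituting this ratio into the identity $g(x)=f(x)-xf'(x)$ produces $\phi_k^*=-g(B_k\log_2 O_k)/h_k$; a direct computation (carried out in Appendix~\ref{B}) shows this value equals the threshold $\gamma_k=\frac{N_k}{h_k}(O_k\ln O_k-O_k+1)$. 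A standard KKT case split on the sign of $\gamma_k-\phi_k^*$ then determines whether the interior stationary point is attained inside $(G_k,D_k)$, or whether the Lagrangian is monotone on that interval and pushes $D_k^{(off)*}$ to $G_k$ or to $D_k$; this is precisely the three-branch structure in the statement.

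Finally, once $D_k^{(off)*}$ is pinned down, the optimal $t_k^{(off)*}$ is obtained from \eqref{eq31} by inverting $g$ at $-h_k\phi_k^*$. Writing $u=2^{D_k^{(off)*}/(B_k t_k^{(off)*})}$ reduces $g(\cdot)=-h_k\phi_k^*$ to the scalar equation $u\ln u-u+1=h_k\phi_k^*/N_k$, which via the substitution $v=\ln u-1$ becomes $v e^{v}=(h_k\phi_k^*-N_k)/(N_k e)$ and so is solved by the Lambert-$W$ function; taking the principal branch and undoing the substitutions yields the closed form \eqref{eq37}. I expect the main obstacle to be the bookkeeping in Case~2: showing that $\gamma_k$ is genuinely the switching threshold between the three branches, and that the Lambert-$W$ inversion lands on $W_0$ rather than $W_{-1}$. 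Both reductions rest on the sign and monotonicity properties of $g$ on $(0,\infty)$ that the authors single out as the content of Appendix~\ref{B}.
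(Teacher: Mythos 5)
Your proposal follows essentially the same route as the paper's Appendix~C: a KKT case split driven by the sign of $O_k-1$, identification of $\gamma_k$ as the value of $\phi_k^*$ consistent with an interior stationary ratio, the three-branch comparison of $\gamma_k$ with $\phi_k^*$ via the monotonicity of $g^{-1}$, and recovery of $t_k^{(off)*}$ by a Lambert-$W_0$ inversion. If anything, your Case~1 argument (showing $\partial L/\partial D_k^{(off)}\ge 0$ on the whole feasible interval because $f'(0)\ge C_kp_k^{(cmp)}h_k$ when $O_k\le 1$) is a cleaner justification than the paper's informal ``it is reasonable to set $D_k^{(off)}=G_k$,'' but it is the same idea.
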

\begin{proof}
	Please refer to Appendix \ref{C}.
\end{proof}
\begin{algorithm}[htbp]
	\caption{The Robust Scheme}
	\begin{algorithmic}[1]
		\STATE \textbf{Initialize:} Iteration index $t=1$, and system parameters. Set  $\phi_{low}=0,\phi_{high}=max(\gamma_k),\lambda_{M}^{(low)}=0,\lambda_{M}^{(high)}=\gamma_M$, where $t_{k}^{(low)},t_{k}^{(high)},t_{M}^{(low)},t_{M}^{(high)}$ are the allocated fractions for the cases $\phi_{low},\phi_{high},\lambda_{M}^{(low)},\lambda_{M}^{(high)}$. Then according to Theorem 1, compute the paramter $O_k,O_M$.
		\STATE \textbf{While:} Problem has a feasible solution
		\STATE \quad Solve problem (P4) by the Interior Point Method, then obtain $\{D_{t,k},t_{t,k}^s\}$;
		\STATE \quad Accroding to Theorem 1, while $t_{k}^{(low)} \neq T_s^{(cmp)}$ and $t_{k}^{(high)} \neq T_s^{(cmp)}$, use Bisection search to each SR, update $\phi_{low},\phi_{high}$ until convergence.
		\STATE \quad Update $E_{t,k}^{Re}=E_{t,k}^{Re}-E_{t,k}^{(tot)}$;
		\STATE \quad Accroding to Theorem 1, while $T_{M}^{(low)} \neq T_M^{(cmp)}$ and $T_{M}^{(high)} \neq T_M^{(cmp)}$, use Bisection search to each SR, update $\lambda_{M}^{(low)},\lambda_{M}^{(high)}$ until convergence.
		\STATE \quad Update $E_t^{Re}=E_t^{Re}-E_{t,M}^{(tot)}$;
		\STATE \quad Updata $t=t+1$;
	\end{algorithmic}
	\label{algorithm1}
\end{algorithm}
According to Theorem 1, a similar scheme can be used to allocate optimal time fraction and offloading data size for MR. Then, the whole process of solving (P2), (P4), and (P5) is summarized in Algorithm 2.

\begin{remark}
	Theorem 1 reveals that the offloading decisions for SRs have a threshold-based structure. Since $\gamma_k=\phi_k^*$ rarely occurs in practice, the optimal offloading decisions basically are a binary structure \cite{17-mao}, which means to offload all data or not offload all data, for each cooperative SR.
\end{remark}
\begin{remark}
	Algorithm 2 has low computation complexity. Specifically, given a solution accuracy $\epsilon \textgreater 0$, the iteration complexity for one-dimensional search is given as $\mathcal{O}(\log(1/\epsilon))$. And the computation complexity of inner loop which use the Interior Point Method is $\mathcal{O}(K^{3.5}\log(1/\epsilon))$. Since it has $N$ times iteration, the complexity of Algorithm 2 is $\mathcal{O}(N((K^{3.5}+2)\log(1/\epsilon)))$. Furthermore, the Algorithm 1's computation complexity is analyzed as follow. As the BCD iterations are in the complexity of the order $\log(1/\epsilon)$, and the complexity of  Method is $\mathcal{O}(K ^3\log(1/\epsilon))$. Then the total complexity of the Algorithm 1 is $\mathcal{O}(N(2K^{3.5}+K^3+2)\log(1/\epsilon))$. Obviously, Algorithm 2 has much lower complexity.
\end{remark}
\begin{table*}[htbp]
	\centering
	\caption{Simulation Parameters}
	\begin{tabular}{|c|c|c|c|}\hline
		Parameters & Value &Parameters & Value\\\hline	
		Number of SR & 3 &Number of MR & 1 \\\hline
		Path loss model:& &The variance of complex & \\
		Licensed: $a=3.75$ & $15+alog_{10}(d)$&white Gaussian channel noise: & $\in[-86,-96]$\\
		Unlicensed: $a=5$ & $d\in[1,100]$ m &$N_k$ & dBm \\\hline
		Sensing time limitation: $T_s$ & $800$ ms &Offloading time limitation: $T_{s}^{(cmp)}$ &$40$ ms  \\\hline
		Bandwidth: $B_k,B_M$ &  $10$ MHz & Offloading time limitation: $T_{M}^{(cmp)}$ & $10$ ms  \\\hline
		The energy consumption & $\in[18,22]$ &The sensing power: $p_k^s$ & $\in [0.05,0.13]$\\
		needed to sense a bit: $e_k^s$&  nJ/bit & & W\\\hline
		The required number of  & $\in [100,1000]$ &The required energy &$ \in [100,500]$ \\
		CPU cycles per bit: $C_k$ & cycles/bit &consumption per cycle: $p_k^{(cmp)}$ &$\times 10^{-11}$ J/cycle  \\\hline
		The computation & $[100,...,800]\times10^6$ &The required number &$100$ \\
		capability: $U_k$ & cycles/s &of CPU cycles per bit: $C_M$ &cycles/bit \\\hline
		The required energy & $ 1000\times10^{-11}$&The initial battery & $[3,2.5,0.5]$ J \\
		consumption per cycle: $p_M^{(cmp)}$ & J/cycle& power of each SR&   \\\hline
		The initial battery power of MR & $10$ J &Low power mode threshold & $0.2$ J \\\hline
		Packet Size & 5000000 bits &The computation capability: $U_M$ & $2.4\times10^9$ cycles/s \\\hline
	\end{tabular}
	\label{tab1}
\end{table*}
\begin{figure*}[h]
	\centering
	\subfloat[The sensing data size allocation and data offloading . ]{
		\includegraphics[width=0.5\textwidth]{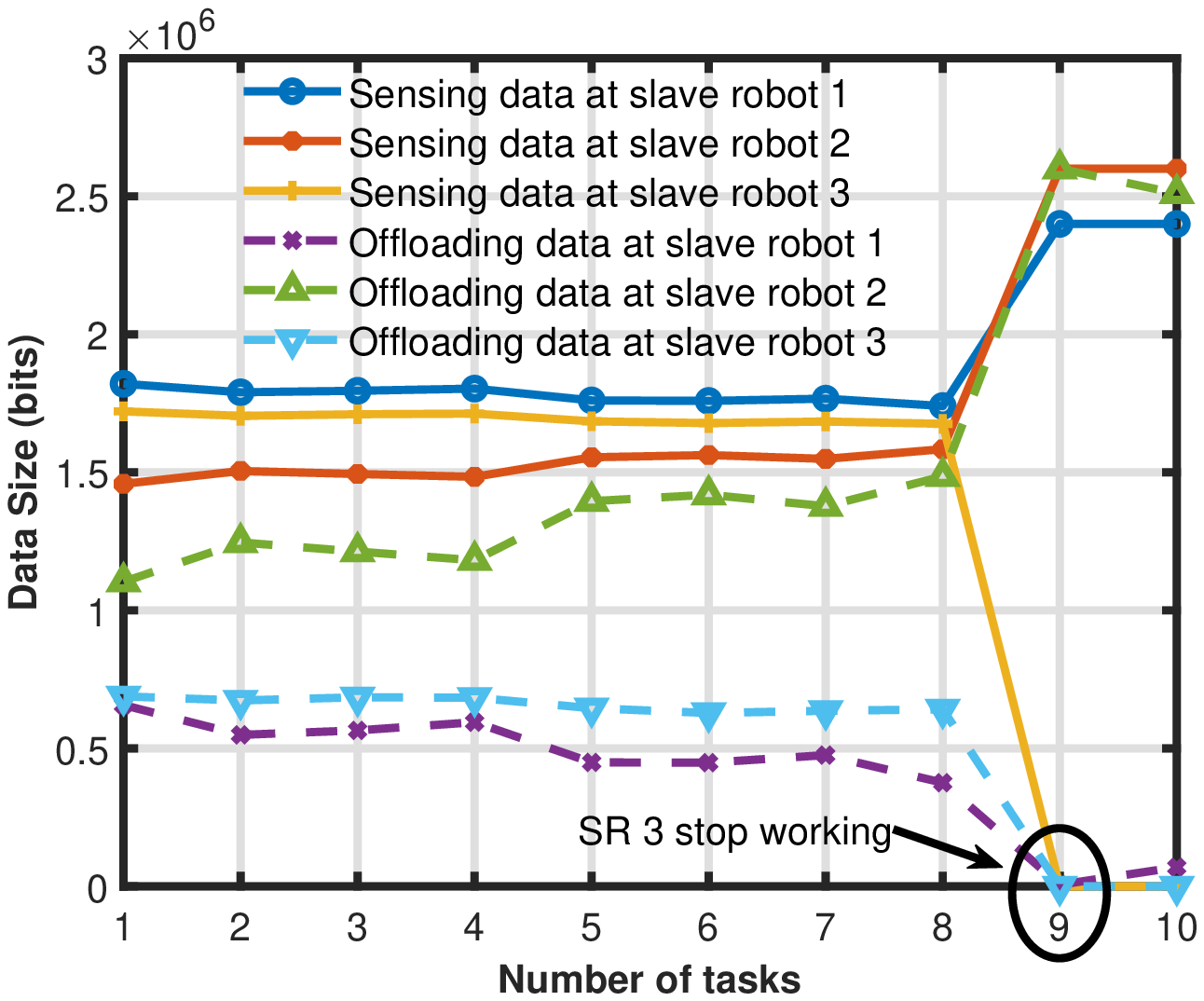}
	}
	\subfloat[The remaining battery energy corresponding to the three SRs.]{
		\includegraphics[width=0.5\textwidth]{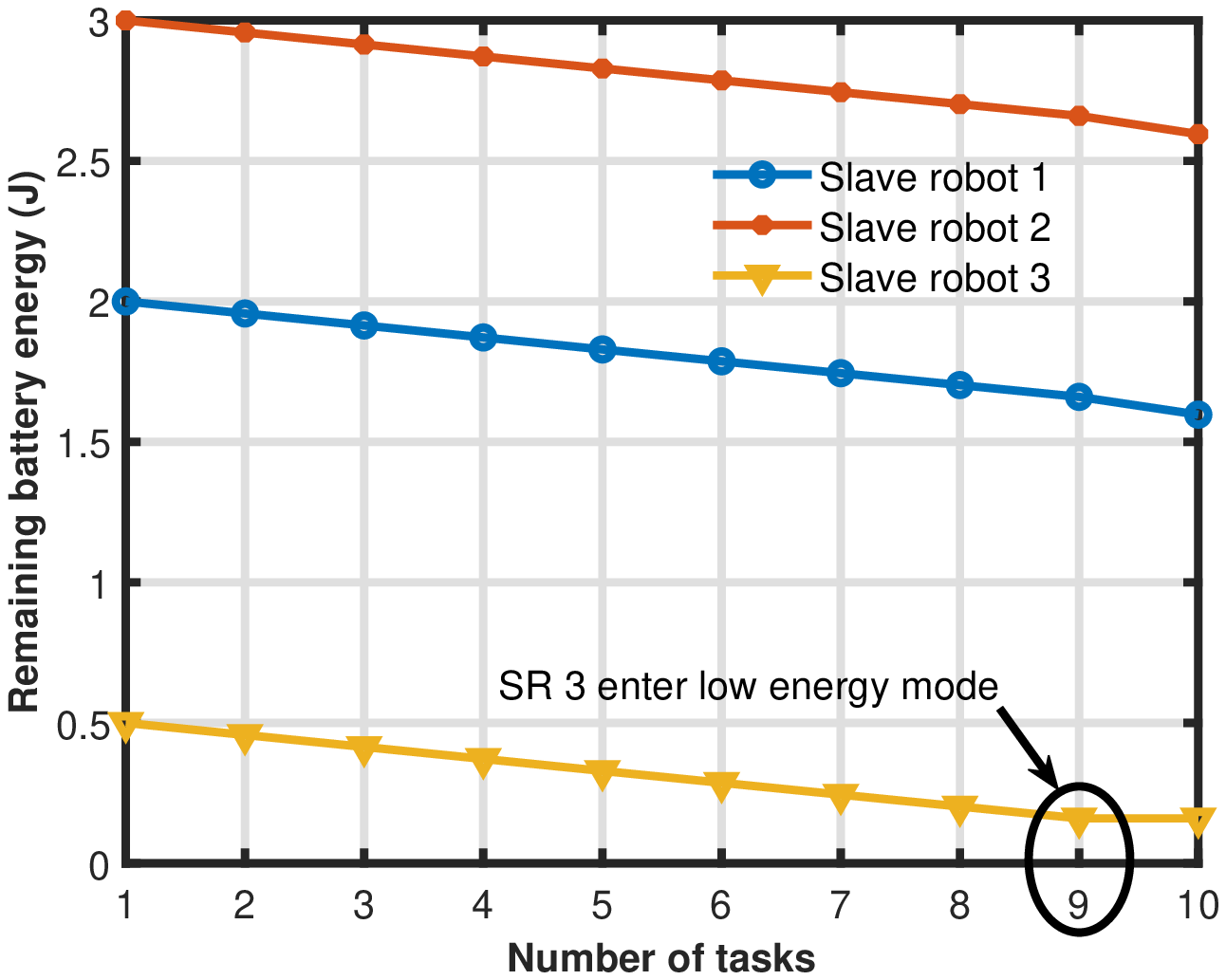}
	}
	\caption{Performance evaluation on MRC-OP.}
	\label{fig4}
\end{figure*}
\begin{figure}[t]
	\centering
	\includegraphics[width=0.5\textwidth]{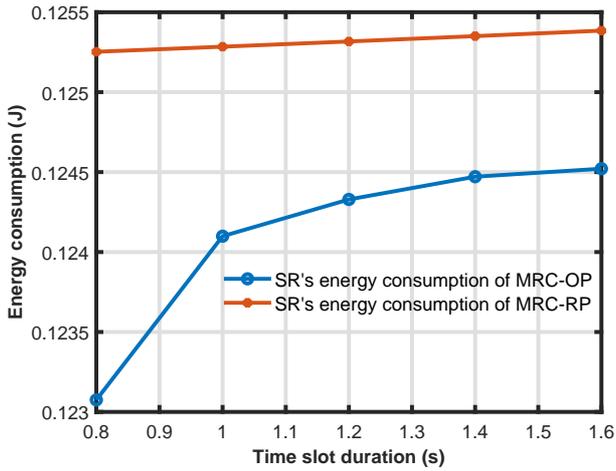}
	\caption{The impact of the constriant on $T_s$.}
	\label{fig5}
\end{figure}
\begin{figure}[t]
	\centering
	\includegraphics[width=0.5\textwidth]{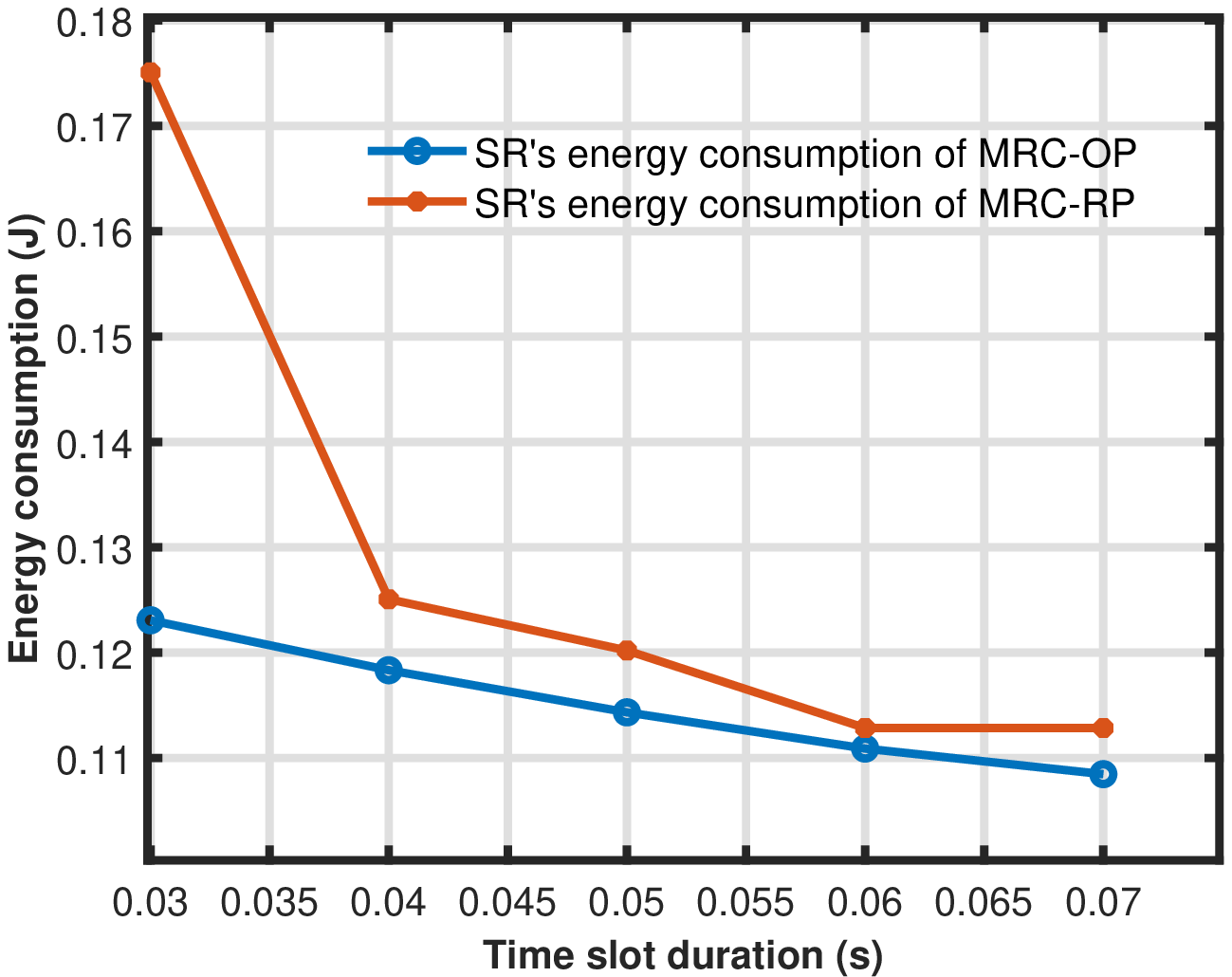}
	\caption{The impact of the constriant on $T_s^{(cmp)}$.}
	\label{fig6}
\end{figure}
\section{Numerical Results}
\label{simulation}

In this section, we provide numerical results to validate the effectiveness of Algorithm $1$ and Algorithm $2$. We refer to Algorithm 1 as ``MRC-OP", and Algorithm 2 as ``MRC-RP". There is an MR leading three SRs in simulation scenario. The key parameters used in the simulations are listed in Table \ref{tab1}.
\begin{figure*}[t]
	\centering
	\subfloat[The sensing data size allocation and data offloading.]{
		\includegraphics[width=0.5\textwidth]{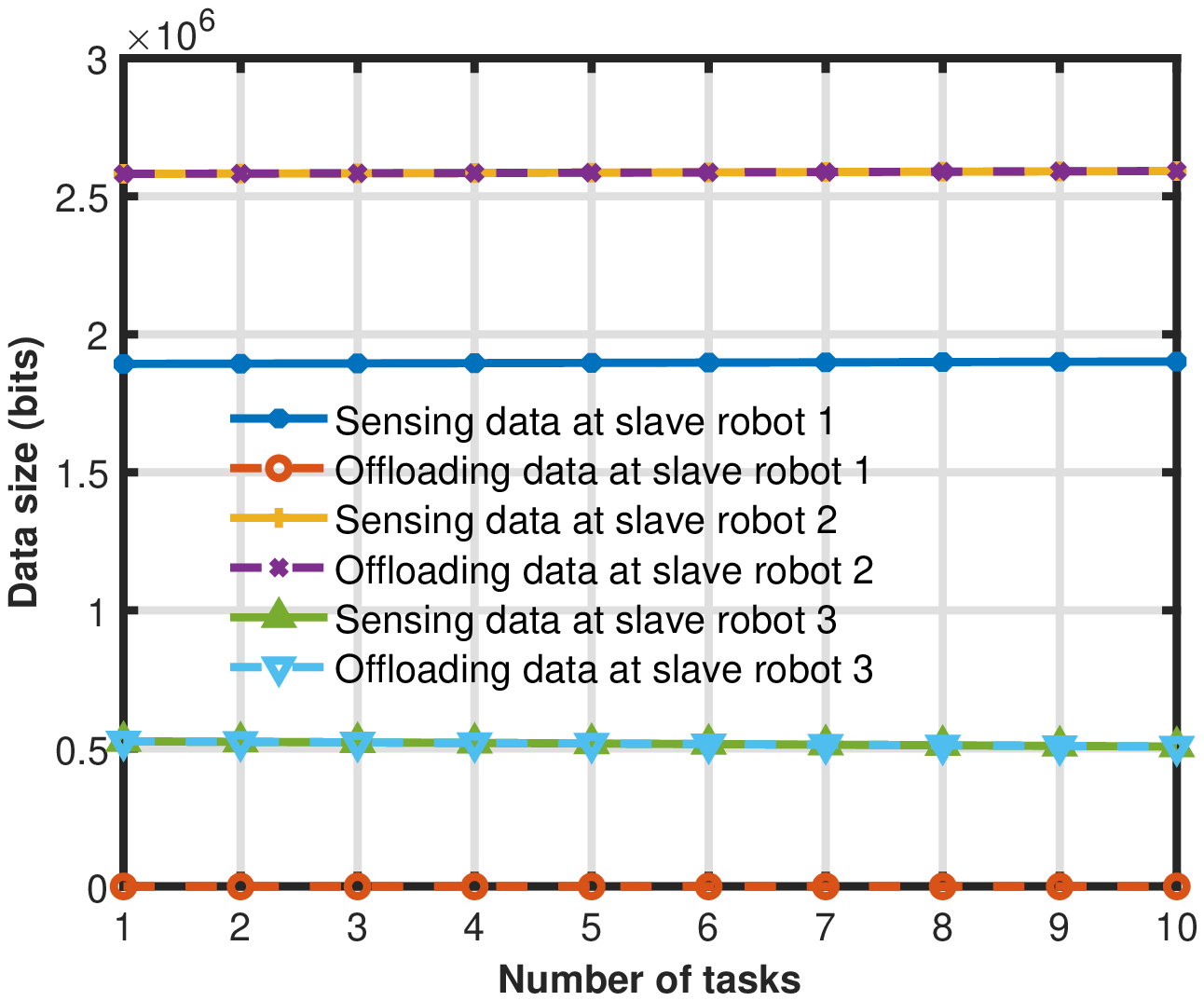}
	}
	\subfloat[The comparsion of remaining battery energy by MRC-OP and MRC-RP schemes.]{
		\includegraphics[width=0.5\textwidth]{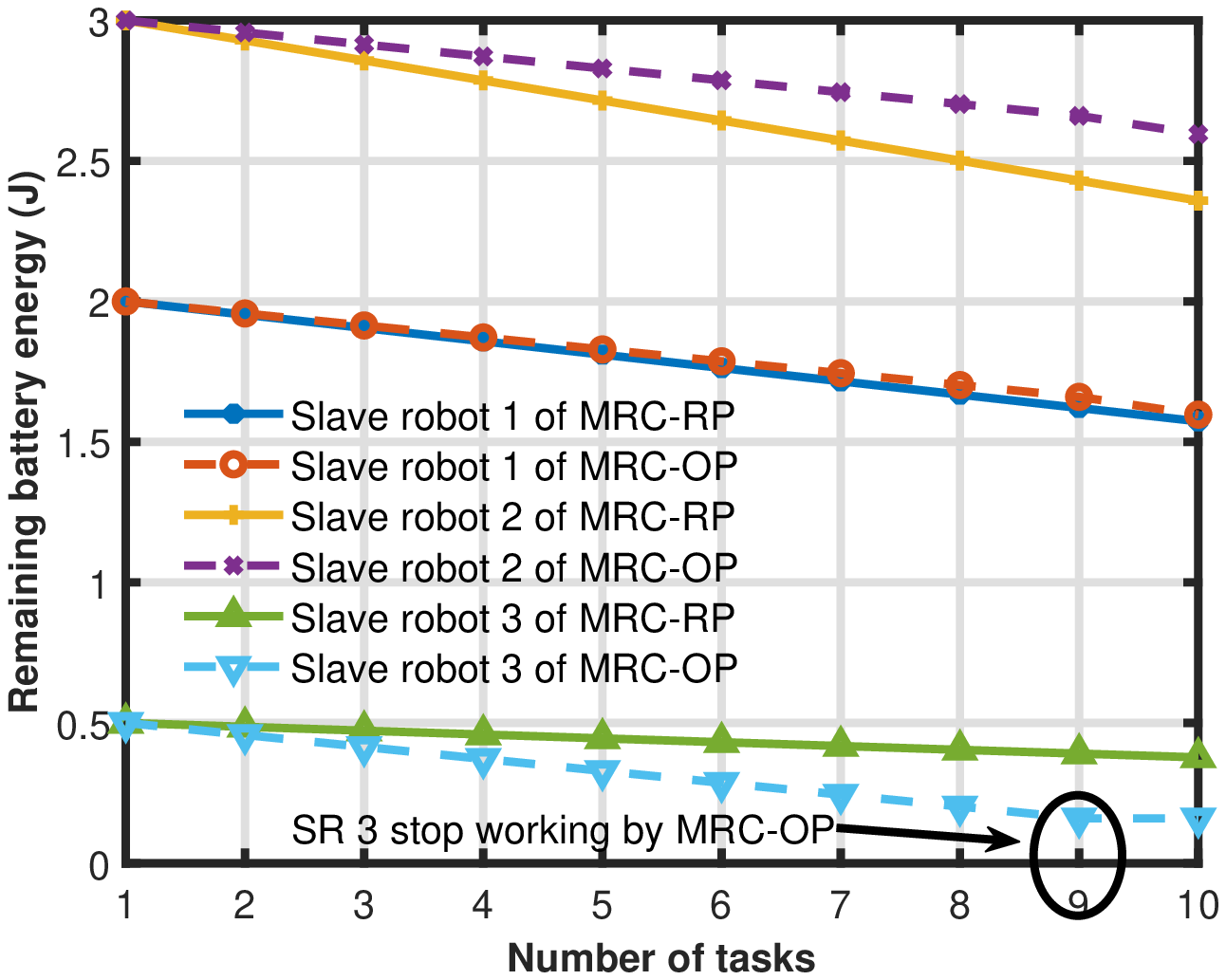}
	}
	\caption{Performance evaluation on MRC-RP.}
	\label{fig7}
\end{figure*}
\begin{figure*}[t]
	\centering
	\subfloat[The comparison of energy consumption by MRC-OP, MRC-RP and GOP.]{
		\includegraphics[width=0.5\textwidth]{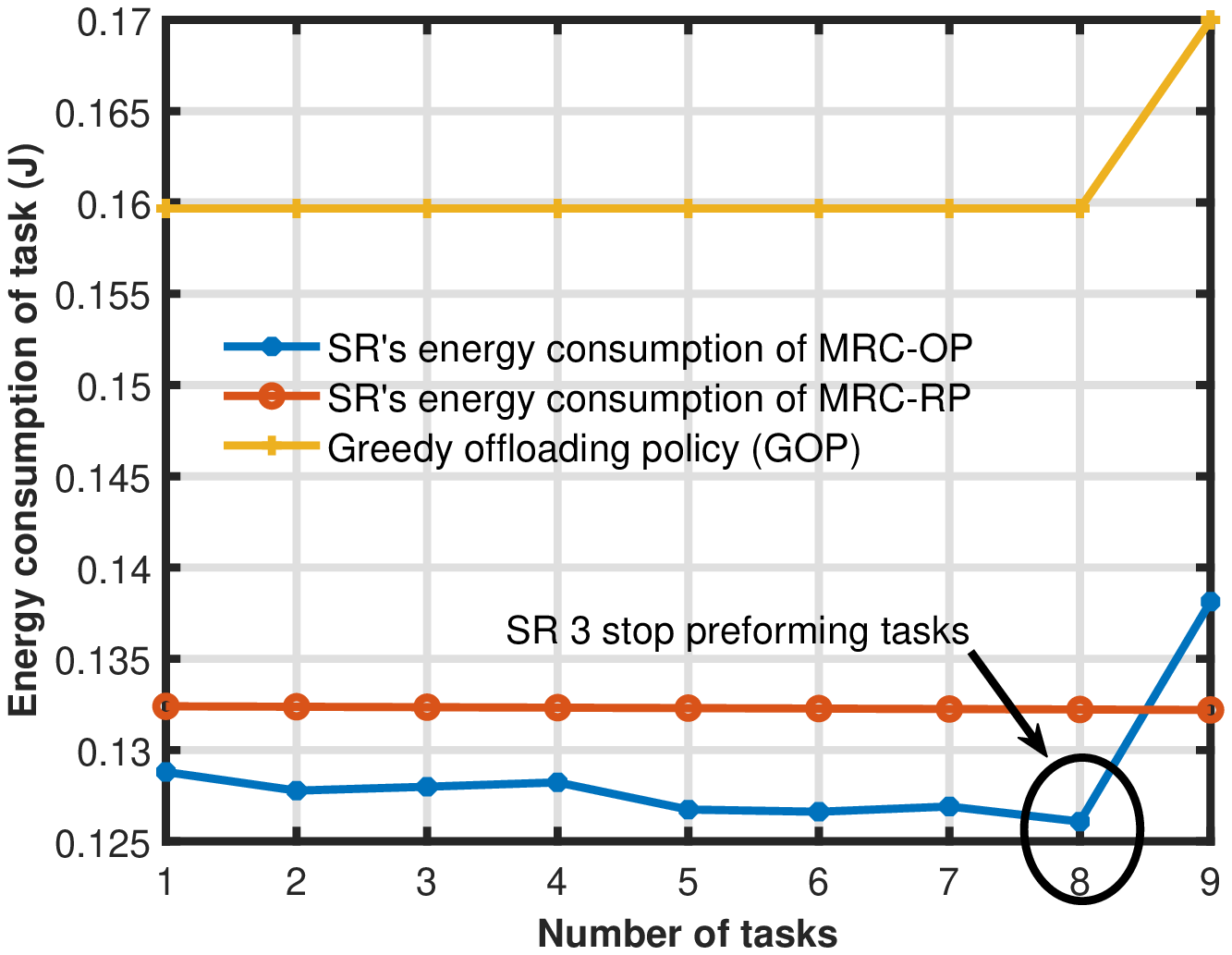}
	}
	\subfloat[The comparison of energy consumption after changing channel gain.]{
		\includegraphics[width=0.5\textwidth]{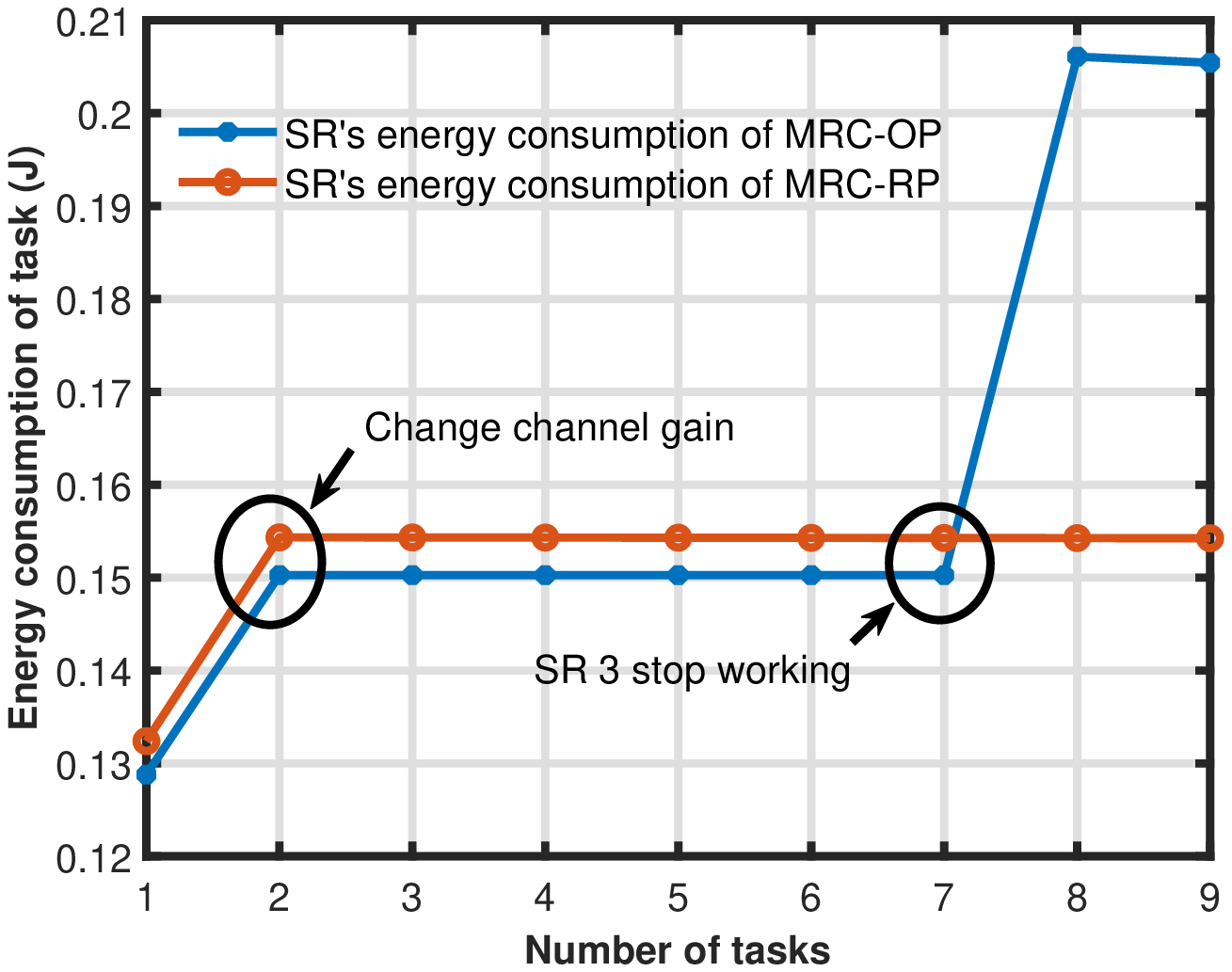}
	}
	\caption{The robustness of MRC-RP.}
	\label{fig8}
\end{figure*}

\subsection{MRC-OP Algorithm Performance}
Figure~\ref{fig4} (a) shows the sensing data size allocation and offloading data size versus the number of accomplished tasks. In this simulation scenario, the SR$2$ and SR$3$ experience good channel conditions. Since the energy consumption per bit on transmission is less than the local computation energy consumption per bit, the SR$2$ and SR$3$ with good channel conditions tend to offload more data. On the other hand, due to the bad channel condition and time limit on local computating and task offloading, $T_s^{(cmp)}$, SR$1$ is not willing to offload. Moreover, after SR$3$ runs out of the battery energy as indicated by the circle in the figure, the MRC-OP scheme has to reallocate the sensing data size and time resource. Figure~\ref{fig4} (b) demonstrates the remaining energy at each SR versus the number of implementing tasks. We can observe, as the number of accomplished tasks increases, more energy would be consumed and the remaining energy at the SRs decreases as well. It is noteworthy that when a robot's remaining energy falls below a threshold, the robot has to enter a low power mode and stop working, as indicating by the circle in Figure~\ref{fig4} (b).

\subsection{Impact of the Latency Constraints}
The influence of the latency constraints on $T_s$ and $T_s^{(cmp)}$ are presented in Figure~\ref{fig5} and Figure~\ref{fig6}, respectively. It is shown that, the proposed MRC-OP scheme outperforms the MRC-RP in terms of energy consumption as the time constraint on sensing duration getting weak. Specifically, when $T_s$ increases, the total energy consumption of SRs by MRC-OP is lower than those by MRC-RP, indicating the effectiveness on energy consumption minimization of MRC-OP. However, the cost on the energy consumption reduction by MRC-OP is the imbalance on the remaining energies among SRs, which will be demonstrated in next subsection.
Next, Figure~\ref{fig6} plots the total energy consumption of SRs versus time limitation on $T_s^{(cmp)}$. As $T_s^{(cmp)}$ goes up, the SRs may spend more time on the offloading. Therefore the total energy consumption on SRs would decrease. 
The above results verify that the proposed  schemes are highly capable of dealing with latency-critical tasks and reducing the energy consumption by fully taking the benefits of MEC technology.

\subsection{MRC-RP Algorithm Performance}
First, we demonstrate the sensing task allocation to the SRs and data offloading at the SRs by MRC-RP in Figure~\ref{fig7}(a).  From the figure, we can observe that the offloading strategy in MRC-RP has a binary structure, where SR1 does not offload, and SR2 and SR3 will offload all their  data to the MR. Therefore, Theorem 1 is verified.
For comparison, we demonstrate the remaining energy of SRs by MRC-OP and MRC-RP, respectively, in Figure~\ref{fig7} (b).
From the figure, we can observe that the SR$3$ can not continue to work after eight tasks are accomplished when MRC-OP is applied. However, SR$3$ can still work by MRC-RP after the accomplishment of eight tasks. Therefore, compared with MEC-OP, MRC-RP allows the individual SR to work longer than those by MRC-OP scheme since the remaining energy of SRs is considered in MEC-RP to define the fairness coefficient. 

\subsection{Comparison on the Robustness}
Figure~\ref{fig8} (a) depicts the total energy consumption of SRs with different numbers of accomplished tasks. For performance comparison, a baseline, \emph{greedy offloading policy} (GOP), is considered, where the SRs would offload as many task bits as the time allows. As we can observer that when all SRs work, the total energy consumption of SRs by MRC-OP is lower than that by MRC-RP and baseline. However, the SR$3$ stop working by MRC-OP and GOP earlier than the MRC-RP scheme. As a consequence, the total energy consumption of SRs by MRC-OP and GOP would be much greater than that by MRC-RP. This is because that the remaining robots have to do more jobs after SR$3$ stops working when MRC-OP or GOP scheme is employed. On the other hand, the MRC-RP scheme takes into account the remaining energy of SRs in advance. Therefore, it can ensure that each SR functions longer than those by MRC-OP and GOP scheme.

Moreover, Figure~\ref{fig8} (b) presents how the system energy consumption changes when the channel gains experienced by the SRs decrease. We set the channel gain as $10^{-6}$ when implementing the first task. Afterwards, we channge the channel gain to $10^{-8}$, by which SR$3$ can only work for 7 tasks by MRC-OP. But, it still works when MRC-RP scheme is used, which clearly illustrates the MRC-RP scheme is more robust to against a variable wireless channel environment.

\section{CONCLUSION}
\label{CONCLUSION}
In this paper, we investigated a time-critical and computation-intensive task implementation framework by MEC-based MRC system, in which an MR acts as an edge server to provide computation and communication services to multiple SRs. Our aim is to save the energy of robots and prolong the MRC system function time. Accordingly, two resource management and task offloading strategies are proposed to minimize and balance the energy consumption of SRs and MR. In the first scheme, the energy consumption of SRs is minimized and balanced while guaranteeing that the tasks can be accomplished under a time constraint. The second scheme introduces the remaining energy related parameter to enhance the robustness of the system. Simulation results reveal that the MEC-based MRC system can effectively assist the cooperation among robots to accomplish the time-critial and computation-intensive tasks under a strict time constrict.

\section*{Acknowledgement}
This work was supported in part by the National Natural Science Foundation of China (Grant No. 61771429), in part by The Okawa Foundation for Information and Telecommunications, in part by G-7 Scholarship Foundation, in part by the Zhejiang Lab Open Program under Grant 2021LC0AB06, in part by the Academy of Finland under Grant 319759, Zhejiang University City College Scientific Research Foundation (No. JZD18002), in part by ROIS NII Open Collaborative Research 21S0601, and in part by JSPS KAKENHI (Grant No. 18KK0279, 19H04093, 20H00592, and 21H03424).

\section*{Appendix}
\label{Appendix}
\subsection{proof of Lemma 1 }
\label{A}
Note that constraints (16c), (18a) need to be proven convex. As $f(x)$ is a convex function, its perspective function $\frac{t_{k}^{(off)}}{h_k}f(\frac{D_k^{(off)}}{t_{k}^{(off)}})$ is a joint convex function of $D_k^{(off)}$ and $t_{k}^{(off)}$, i.e., constraints (16c), (18a) are convex. Then, all the constraints are convex, thus the problem (P1) turns out to be a convex optimization problem.
\subsection{proof of equation (35) }
\label{B}
Note that $f(x)=N_k(2^{\frac{x}{B_k}}-1)$ and $g(x)=f(x)-xf^{'}(x)$, it has
\begin{equation}\tag{B.1}
	f^{'}(C_kp_k^{(cmp)}h_k)=\frac{N_k\ln2}{B_k}2^{\frac{C_kp_k^{(cmp)}h_k}{B_k}}.
	\label{eqa1}
\end{equation}
Then the inverse function of equation \eqref{eqa1} is
\begin{equation}\tag{B.2}
	f^{'-1}(C_kp_k^{(cmp)}h_k)=B_k\log_2(\frac{BC_kp_k^{(cmp)}h_k}{N_k\ln2}).
	\label{eqa2}
\end{equation}
Next, according to the definitions of $f(x)$ and $g(x)$, the solution for \eqref{eq34} is
\begin{align}\tag{B.3}
	\frac{h_k}{N_k}\gamma_k =& g(f^{'-1}(C_kp_k^{(cmp)}h_k))\notag\\	=&f(f^{'-1}(C_kp_k^{(cmp)}h_k))\notag\\
	&-f^{'-1}(C_kp_k^{(cmp)}h_k)f^{'-1}(f^{'-1}(C_kp_k^{(cmp)}h_k))\notag \\
	=&f(f^{'-1}(C_kp_k^{(cmp)}h_k))\notag\\
	&-f^{'-1}(C_kp_k^{(cmp)}h_k)C_kp_k^{(cmp)}h_k\notag\\
	=&(O_k\ln O_k-O_k+1)\notag,
\end{align}
where $O_k$ is given by \eqref{eq36}.

Furthermore, given $O_k \geq 1$, we proved that $\gamma_k$ is a monotone increasing function. By deriving the first derivatives of \eqref{eq35} with respect to each parameter, it can be written as:
\begin{equation}\nonumber
	\begin{aligned}
		\frac{\partial \gamma_k}{\partial C_k}&=\frac{N_k}{h_k}(\frac{\partial O_k}{\partial C_k}\ln{O_k}), \\
		\frac{\partial \gamma_k}{\partial p_k^{(cmp)}}&=\frac{N_k}{h_k}(\frac{\partial O_k}{\partial p_k^{(cmp)}}\ln{O_k}), \\
		\frac{\partial \gamma_k}{\partial h_k}&=\frac{N_k(O_k-1)}{h_k}.
	\end{aligned}
\end{equation}
According to the properties of logarithm function and the definition of the monotone increasing function, it is easy to conclude that, when $O_k \geq 1$, $\gamma_k$ is a monotone increasing function for parameter $C_k,p_k^{(cmp)},h_k$.
\subsection{proof of Theorem 1.}
\label{C}
1) In order to prove the Theoroem 1, we need the following lemma.
\begin{lemma}\label{lemma2}
	Given $-h_k\gamma_k<0$, the function $g^{'-1}(-h_k\gamma_k)$ is a monotone decreasing function.
\end{lemma}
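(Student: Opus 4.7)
The plan is to reduce the lemma to a standard calculus fact: a strictly monotone function has a strictly monotone inverse of the same type. The work is therefore to verify monotonicity of the underlying function (either $g$ itself or its derivative $g'$, depending on whether ``$g'^{-1}$'' in the statement is read as $g^{-1}$ or as $(g')^{-1}$) on the domain $x>0$, and to check that $-h_k\gamma_k$ lies in the image so that the inverse is defined there.

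First, starting from $f(x)=N_k(2^{x/B_k}-1)$, the exponential form makes $f'$, $f''$, $f'''$ all strictly positive on $[0,\infty)$. Differentiating the definition $g(x)=f(x)-xf'(x)$ collapses the first term and yields $g'(x)=-xf''(x)$, which is strictly negative for every $x>0$. One further differentiation, $g''(x)=-f''(x)-xf'''(x)$, is likewise strictly negative on $(0,\infty)$. Hence both $g$ and $g'$ are strictly decreasing on $(0,\infty)$ under either reading of the notation.

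Second, I would note that $g(0)=0$ (and $g'(0)=0$) while $g(x),g'(x)\to-\infty$ as $x\to\infty$, so the relevant function maps $(0,\infty)$ bijectively onto $(-\infty,0)$. The hypothesis $-h_k\gamma_k<0$ therefore places the argument inside the image, which is exactly what is needed for the inverse to be well-defined. Since inversion preserves the ``strictly decreasing'' property on a monotone branch, the conclusion follows immediately.

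There is no real analytic obstacle: the only routine care needed is to track signs and ranges so that the inverse makes sense on the stated domain, after which monotonicity is automatic. This is consistent with the lemma appearing as a short bookkeeping remark used to set up the threshold structure in the proof of Theorem~1.
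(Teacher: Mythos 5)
Your proof is correct, but it takes a genuinely different route from the paper. The paper proves the lemma by writing down an explicit closed form for the inverse, $g^{-1}(x)=\frac{B_k[W_0(\frac{x+N_k}{-N_ke})+1]}{\ln2}$, via the Lambert $W$ function, and then reads off monotonicity from that expression (note that the paper's phrasing is loose here: $W_0$ is in fact increasing on its principal branch, and the decrease of $g^{-1}$ comes from pre-composing with the affine map $x\mapsto\frac{x+N_k}{-N_ke}$, which has negative slope). You instead never invert anything explicitly: you compute $g'(x)=-xf''(x)<0$ on $(0,\infty)$, check $g(0)=0$ and $g(x)\to-\infty$, so $g$ is a strictly decreasing bijection from $(0,\infty)$ onto $(-\infty,0)$, and conclude by the elementary fact that the inverse of a strictly decreasing bijection is strictly decreasing. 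Your argument is more elementary and self-contained, and it also does the bookkeeping the paper skips, namely verifying that $-h_k\gamma_k<0$ actually lies in the range of $g$ so that the inverse is defined there; the paper's approach buys the explicit formula for $g^{-1}$, which is needed anyway in equation (37) to compute $t_k^{(off)*}$, so the Lambert-$W$ computation is not wasted effort in context. Your handling of the notational ambiguity between $g^{-1}$ and $(g')^{-1}$ is prudent; the paper's subsequent use of $g^{-1}(-h_k\phi_k^*)$ confirms the intended object is $g^{-1}$ and the prime in the lemma statement is a typo.
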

First, using the definition and property of Lambert function \cite{Lambert}, function $g^{-1}(x)$ is expressed as follow:
\begin{equation}\tag{C.1}
	g^{-1}(x)=\frac{B_k[W_0(\frac{x+N_k}{-N_ke})+1]}{\ln2}.
	\label{c1}
\end{equation}

Because of lambert function is a monotone decreasing function in the domain of definition, thus function $g^{-1}(x)$ is a decreasing function similarly for $x\textless0$.

2) Then, consider the case that $O_k \textless 1$, the threshold function lose its properity of monotone increasing function. And it means the SR has very bad channel state or low local computing energy consumption per bit. Thus, it is reasonable to set $D_k^{(off)}=G_k$, while $t_k^{(off)}$ calculate from \eqref{eq37}. Specifically, when $G_k=0$, it result in $D_k^{(off)}=t_k^{(off)}=0$.

For the case that $O_k \textless 1$, it leads to $D_k^{(off)}>0$, and the time-latency constraint should be active since remaining time can be used for extending offloading duration so as to reduce transmission energy.
\begin{enumerate}
	\item Consider the case where $\gamma_k\textgreater\phi_k^*\textgreater0$, it has $-h_k\gamma_k\textless-h_k\phi_k^*\textless0$. We know $g^{-1}(x)$ is a decreasing function, thus it has
	\begin{equation}\nonumber
		\begin{aligned}
			\frac{D_k^{(off)}}{t_{k}^{(off)*}}=g^{-1}(-h_k\gamma_k)>g^{-1}(-h_k\phi_k^*)\\
			=f^{'-1}(C_kp_k^{(cmp)}h_k)=\frac{D_k^{(off) *}}{t_{k}^{(off)*}}.
		\end{aligned}
	\end{equation}
	Based on the greedy policy, it follows that $D_k^{(off)*}=D_k$.
	\item Consider the case where $\gamma_k=\phi_k^*$, clearly it has
	\begin{equation}\nonumber
		\begin{aligned}
			\frac{D_k^{(off)}}{t_{k}^{(off)*}}=g^{-1}(-h_k\gamma_k)=g^{-1}(-h_k\phi_k^*)\\
			=f^{'-1}(C_kp_k^{(cmp)}h_k)=\frac{D_k^{(off) *}}{t_{k}^{(off)*}}.
		\end{aligned}
	\end{equation}
	It leads to $D_k^{(off)*} \in (G_k,D_k)$.
	\item Consider the case where $\gamma_k\textless\phi_k^*$, it has $-h_k\gamma_k\textgreater-h_k\phi_k^*\textgreater0$, similarly, it has
	\begin{equation}\nonumber
		\begin{aligned}
			\frac{D_k^{(off)}}{t_{k}^{(off)*}}=g^{-1}(-h_k\gamma_k)<g^{-1}(-h_k\phi_k^*)\\
			=f^{'-1}(C_kp_k^{(cmp)}h_k)=\frac{D_k^{(off) *}}{t_{k}^{(off)*}}.
		\end{aligned}
	\end{equation}
	Thus it follows $D_k^{(off)*}=G_k$.
\end{enumerate}
Finally, from \eqref{eq34}, it follows that
\begin{equation}\tag{C.2}
	t_k^{(off)*}=\frac{D_k^{(off)*}}{g^{-1}(-h_k\phi_k^*)}\\
	=\frac{B_k[W_0(\frac{-h_k\gamma+N_k}{-N_ke})+1]}{\ln2}D_k^{(off) *},
	\label{c2}
\end{equation}
where \eqref{c2} is obtained using \eqref{c1}. Ending the proof.

\bibliographystyle{gbt7714-numerical}
\bibliography{myref}

\end{document}